\documentclass[10pt]{article}

\usepackage[margin=2.5cm, includefoot, footskip=30pt]{geometry}
\pagestyle{plain}
\setlength{\parindent}{0em}
\setlength{\parskip}{1em}

\usepackage{tikz}
\usepackage{amsmath}
\usepackage{amssymb}
\usepackage{mathtools}
\usepackage{amsthm}
\usepackage{graphicx}
\usepackage{subcaption}
\usepackage{standalone}
\usepackage{booktabs}
\usepackage{setspace}
\usepackage[algoruled,lined]{algorithm2e}
\usepackage[noend]{algpseudocode}
\usepackage{wrapfig}
\usepackage{hyperref}
\usepackage{authblk}
\usepackage[toc,page]{appendix}
\usetikzlibrary{calc, shapes, patterns, decorations.pathreplacing}

\makeatletter
\def\BState{\State\hskip-\ALG@thistlm}
\makeatother

\newcommand{\R}{\mathbb{R}}
\newtheorem{theorem}{Theorem}
\usetikzlibrary{decorations.pathmorphing, decorations.pathreplacing, angles,
                quotes, calc, er, positioning}

\newtheorem{lemma}[theorem]{Lemma}

\title{Using a theory of mind to find best responses to memory-one strategies.}
\author[1, 2, *]{Nikoleta E. Glynatsi}
\author[1]{Vincent A. Knight}

\affil[1]{Cardiff University, School of Mathematics, Cardiff, United Kingdom}
\affil[2]{Max Planck Institute for Evolutionary Biology, Pl\"{o}n, Germany}
\affil[*]{Corresponding author: Nikoleta E. Glynatsi, glynatsi@evolbio.mpg.de}
\date{}
\setcounter{Maxaffil}{0}

\begin{document}

\maketitle

\newpage

\begin{abstract}
    Memory-one strategies are a set of Iterated Prisoner's Dilemma strategies
    that have been praised for their mathematical tractability and performance
    against single opponents. This manuscript investigates
    \textit{best response} memory-one strategies with a theory of mind for
    their opponents. The results add to the literature that has shown that
    extortionate play is not always optimal by showing that optimal play is
    often not extortionate.
    They
    also provide evidence that memory-one strategies suffer from their limited
    memory in multi agent interactions and can be out performed by
    optimised strategies with longer memory.
    We have developed a theory that has allowed to explore the entire
    space of memory-one strategies. The framework presented is suitable to
    study memory-one strategies in the Prisoner's Dilemma, but also
    in evolutionary processes such as the Moran process,
    Furthermore, results on the stability of defection in populations of
    memory-one strategies are also obtained.
\end{abstract}

The Prisoner's Dilemma (PD) is a two player game used in understanding the
evolution of cooperative behaviour, formally introduced in~\cite{Flood1958}.
Each player has two options, to cooperate (C) or to defect (D). The decisions
are made simultaneously and independently. The normal form representation of the
game is given by:

\begin{equation}\label{equ:pd_definition}
    S_p =
    \begin{pmatrix}
        R & S  \\
        T & P
    \end{pmatrix}
    \quad
    S_q =
    \begin{pmatrix}
        R & T  \\
        S & P
    \end{pmatrix}
\end{equation}

where \(S_p\) represents the utilities of the row player and \(S_q\) the
utilities of the column player. The payoffs, \((R, P, S, T)\), are constrained
by \(T > R > P > S\) and \(2R > T + S\), and the most common values used in the
literature are \((R, P, S, T) = (3, 1, 0, 5)\)~\cite{Axelrod1981}.
The numerical experiments of our manuscript are carried out using these
payoff values.
The PD is a one shot game, however, it is commonly studied in a manner where the
history of the interactions matters. The repeated form of the game is called the
Iterated Prisoner's Dilemma (IPD).

Memory-one strategies are a set of IPD strategies that have been
studied thoroughly in the literature~\cite{Nowak1990, Nowak1993}, however, they have gained
most of their attention when a certain subset of memory-one strategies was
introduced in~\cite{Press2012}, the zero-determinant strategies (ZDs). In~\cite{Stewart2012} it
was stated that ``Press and Dyson have fundamentally changed the viewpoint on
the Prisoner's Dilemma''.
A special case of ZDs are extortionate strategies that choose their actions so that a linear relationship is forced
between the players' score ensuring that they will always
receive at least as much as their opponents. ZDs are
indeed mathematically unique and are proven to be robust in pairwise
interactions, however, their true effectiveness in tournaments and
evolutionary dynamics has been questioned~\cite{adami2013, Hilbe2013b,
Hilbe2013, Hilbe2015, Knight2018, Harper2015}.

The purpose of this work is to reinforce the literature on the limitations of
extortionate strategies by considering a new approach. More specifically, by
considering best response memory-one strategies with a theory of mind of
their opponents. There are several works in the literature that have considered
strategies with a theory of mind~\cite{Han2011, De2013, Devaine2014, Han2012,
Press2012, Stewart2012}. These works defined ``theory of mind'' as intention
recognition~\cite{Han2011, Han2012, De2013, Devaine2014} and as the ability of a
strategy to realise that their actions can influence
opponents~\cite{Stewart2012}. Compared to these works, theory of mind is defined
here as the ability of a strategy to know the behaviour of their opponents
and alter their own behaviour in response to that.

We present a closed form algebraic expression for the utility of a
memory-one strategy against a given set of opponents and a compact method of
identifying it's best response to that given set of opponents.
The aim is to evaluate whether a best response memory-one
strategy behaves in a zero-determinant way which in turn indicates whether it
can be extortionate. We do this using a linear algebraic approach presented
in~\cite{Knight2019}. This is done in tournaments with two opponents.
Moreover, we introduce a framework that allows the comparison of
an optimal memory-one strategy and an optimised strategy which has a larger
memory.

To illustrate the analytical results obtained in this manuscript a number of
numerical experiments are run. The source code for these experiments has been
written in a sustainable manner~\cite{Benureau2018}.
It is open source (\url{https://github.com/Nikoleta-v3/Memory-size-in-the-prisoners-dilemma})
and tested which ensures the validity of the results. It has also been archived
and can be found at~\cite{nikoleta_glynatsi_2019}.

\section{Methods}

One specific advantage of memory-one strategies is their mathematical
tractability. They can be represented completely as an element of \(\R^{4}_{[0, 1]}\). This
originates from~\cite{Nowak1989} where it is stated that if a strategy is
concerned with only the outcome of a single turn then there are four possible
`states' the strategy could be in; both players cooperated (\(CC\)),
the first player cooperated whilst the second player defected (\(CD\)),
the first player defected whilst the second player cooperated (\(DC\)) and
both players defected (\(DD\)).
Therefore, a memory-one strategy can be denoted by the probability vector of
cooperating after each of these states; \(p=(p_1, p_2, p_3, p_4) \in \R_{[0,1]}
^ 4\).

In~\cite{Nowak1989} it was shown that it is not necessary to simulate the play
of a strategy $p$ against a memory-one opponent $q$. Rather this exact behaviour
can be modeled as a stochastic process, and more specifically as a Markov chain
whose corresponding transition matrix \(M\) is
given by Eq.~\ref{eq:transition_matrix}. The long run steady state probability
vector \(v\), which is the solution to \(v M = v\), can be
combined with the payoff matrices of Eq.~\ref{equ:pd_definition} to give the expected
payoffs for each player. More specifically, the utility for a memory-one
strategy \(p\) against an opponent \(q\), denoted as \(u_q(p)\), is given by
Eq.~\ref{eq:press_dyson_utility}.

\begin{equation}\label{eq:transition_matrix}
    \resizebox{.5\hsize}{!}{$M = \left[\begin{matrix}p_{1} q_{1} & p_{1} \left(- q_{1} + 1\right) & q_{1} \left(- p_{1} + 1\right) & \left(- p_{1} + 1\right) \left(- q_{1} + 1\right)\\p_{2} q_{3} & p_{2} \left(- q_{3} + 1\right) & q_{3} \left(- p_{2} + 1\right) & \left(- p_{2} + 1\right) \left(- q_{3} + 1\right)\\p_{3} q_{2} & p_{3} \left(- q_{2} + 1\right) & q_{2} \left(- p_{3} + 1\right) & \left(- p_{3} + 1\right) \left(- q_{2} + 1\right)\\p_{4} q_{4} & p_{4} \left(- q_{4} + 1\right) & q_{4} \left(- p_{4} + 1\right) & \left(- p_{4} + 1\right) \left(- q_{4} + 1\right)\end{matrix}\right]$}
\end{equation}

\begin{equation}\label{eq:press_dyson_utility}
    u_q(p) = v \cdot (R, S, T, P).
\end{equation}

This manuscript has explored the form of \(u_q(p)\), to the best of the authors knowledge no
previous work has done this, and Theorem~\ref{theorem_one} states that \(u_q(p)\) is given by a ratio
of two quadratic forms~\cite{kepner2011}.

\begin{theorem}\label{theorem_one}
    The expected utility of a memory-one strategy \(p\in\mathbb{R}_{[0,1]}^4\)
    against a memory-one opponent \(q\in\mathbb{R}_{[0,1]}^4\), denoted
    as \(u_q(p)\), can be written as a ratio of two quadratic forms:

    \begin{equation}\label{eq:optimisation_quadratic}
    u_q(p) = \frac{\frac{1}{2}pQp^T + cp + a}
                {\frac{1}{2}p\bar{Q}p^T + \bar{c}p + \bar{a}},
    \end{equation}
    where \(Q, \bar{Q}\) \(\in \R^{4\times4}\) are square matrices defined by the
    transition probabilities of the opponent \(q_1, q_2, q_3, q_4\) as follows:

    \begin{center}
    \begin{equation}
    \resizebox{0.9\linewidth}{!}{\arraycolsep=2.5pt%
    \boldmath\(
    Q = \left[\begin{matrix}0 & - \left(q_{1} - q_{3}\right) \left(P q_{2} - P - T q_{4}\right) & \left(q_{1} - q_{2}\right) \left(P q_{3} - S q_{4}\right) & \left(q_{1} - q_{4}\right) \left(S q_{2} - S - T q_{3}\right)\\- \left(q_{1} - q_{3}\right) \left(P q_{2} - P - T q_{4}\right) & 0 & \left(q_{2} - q_{3}\right) \left(P q_{1} - P - R q_{4}\right) & - \left(q_{3} - q_{4}\right) \left(R q_{2} - R - T q_{1} + T\right)\\\left(q_{1} - q_{2}\right) \left(P q_{3} - S q_{4}\right) & \left(q_{2} - q_{3}\right) \left(P q_{1} - P - R q_{4}\right) & 0 & \left(q_{2} - q_{4}\right) \left(R q_{3} - S q_{1} + S\right)\\\left(q_{1} - q_{4}\right) \left(S q_{2} - S - T q_{3}\right) & - \left(q_{3} - q_{4}\right) \left(R q_{2} - R - T q_{1} + T\right) & \left(q_{2} - q_{4}\right) \left(R q_{3} - S q_{1} + S\right) & 0\end{matrix}\right]\)},
    \end{equation}
    \begin{equation}\label{eq:q_bar_matrix}
    \resizebox{0.8\linewidth}{!}{\arraycolsep=2.5pt%
    \boldmath\(
    \bar{Q} =  \left[\begin{matrix}0 & - \left(q_{1} - q_{3}\right) \left(q_{2} - q_{4} - 1\right) & \left(q_{1} - q_{2}\right) \left(q_{3} - q_{4}\right) & \left(q_{1} - q_{4}\right) \left(q_{2} - q_{3} - 1\right)\\- \left(q_{1} - q_{3}\right) \left(q_{2} - q_{4} - 1\right) & 0 & \left(q_{2} - q_{3}\right) \left(q_{1} - q_{4} - 1\right) & \left(q_{1} - q_{2}\right) \left(q_{3} - q_{4}\right)\\\left(q_{1} - q_{2}\right) \left(q_{3} - q_{4}\right) & \left(q_{2} - q_{3}\right) \left(q_{1} - q_{4} - 1\right) & 0 & - \left(q_{2} - q_{4}\right) \left(q_{1} - q_{3} - 1\right)\\\left(q_{1} - q_{4}\right) \left(q_{2} - q_{3} - 1\right) & \left(q_{1} - q_{2}\right) \left(q_{3} - q_{4}\right) & - \left(q_{2} - q_{4}\right) \left(q_{1} - q_{3} - 1\right) & 0\end{matrix}\right]\)}.
    \end{equation}
    \end{center}

    \(c \text{ and } \bar{c}\) \(\in \R^{4 \times 1}\) are similarly defined by:

    \begin{equation}\label{eq:q_matrix_numerator}
    \resizebox{0.6\linewidth}{!}{\arraycolsep=2.5pt%
    \boldmath\(c = \left[\begin{matrix}q_{1} \left(P q_{2} - P - T q_{4}\right)\\- \left(q_{3} - 1\right) \left(P q_{2} - P - T q_{4}\right)\\- P q_{1} q_{2} + P q_{2} q_{3} + P q_{2} - P q_{3} + R q_{2} q_{4} - S q_{2} q_{4} + S q_{4}\\- R q_{2} q_{4} + R q_{4} + S q_{2} q_{4} - S q_{2} - S q_{4} + S + T q_{1} q_{4} - T q_{3} q_{4} + T q_{3} - T q_{4}\end{matrix}\right]\),}
    \end{equation}
    \begin{equation}\label{eq:q_matrix_denominator}
    \resizebox{0.3\linewidth}{!}{\arraycolsep=2.5pt%
    \boldmath\(\bar{c} = \left[\begin{matrix}q_{1} \left(q_{2} - q_{4} - 1\right)\\- \left(q_{3} - 1\right) \left(q_{2} - q_{4} - 1\right)\\- q_{1} q_{2} + q_{2} q_{3} + q_{2} - q_{3} + q_{4}\\q_{1} q_{4} - q_{2} - q_{3} q_{4} + q_{3} - q_{4} + 1\end{matrix}\right]\),
    }
    \end{equation}
    and the constant terms \(a, \bar{a}\) are defined as \(a = - P q_{2} + P + T q_{4}\) and
    \(\bar{a} = - q_{2} + q_{4} + 1\).
\end{theorem}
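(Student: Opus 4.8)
The plan is to avoid computing the stationary vector $v$ explicitly and instead exploit the adjugate representation of the solution of $v(M-I)=0$, following the determinant technique of Press and Dyson. First I would write $M' = M - I$ and note that, since the chain has a one-dimensional stationary space, $\operatorname{adj}(M')\,M' = \det(M')I = 0$, so every row of $\operatorname{adj}(M')$ is a left null vector of $M'$ and hence proportional to $v$. Fixing any index $i$, writing $f=(R,S,T,P)^T$ and $\mathbf 1 = (1,1,1,1)^T$, and normalising by $v\cdot\mathbf 1 = 1$, a cofactor expansion along column $i$ gives the compact identity
\begin{equation}
u_q(p) = v\cdot f = \frac{\det M'_i(f)}{\det M'_i(\mathbf 1)},
\end{equation}
where $M'_i(x)$ denotes $M'$ with its $i$-th column replaced by $x$. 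This reduces the entire statement to understanding how two $4\times 4$ determinants depend on $p$.

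The key structural step is a variable-separating column operation. Adding the first column of $M'$ to its second and third columns (which leaves every determinant unchanged) turns the second column into $(p_1-1,\,p_2-1,\,p_3,\,p_4)^T$, affine in $p$ and free of $q$, and turns the third column into $(q_1-1,\,q_3,\,q_2-1,\,q_4)^T$, which contains no $p$ at all; the first column $(p_1q_1-1,\,p_2q_3,\,p_3q_2,\,p_4q_4)^T$ stays bilinear and the replaced column is constant in $p$. Expanding over permutations, each term picks exactly one entry from the first column (degree $\le 1$ in $p$) and one from the second column (degree $\le 1$ in $p$), while the third and the replaced columns contribute factors of degree $0$ in $p$. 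Hence both $\det M'_i(f)$ and $\det M'_i(\mathbf 1)$ are polynomials of degree at most two in $p$, which is precisely the claimed quadratic-form shape $\tfrac12 pQp^T + cp + a$ and $\tfrac12 p\bar Qp^T + \bar c p + \bar a$.

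This viewpoint also yields the vanishing diagonals of $Q$ and $\bar Q$ for free: the two $p$-linear factors in any nonzero permutation term are drawn from two \emph{different} rows (the first and second columns are distinct), so they involve two distinct variables $p_j,p_k$ with $j\ne k$, and no square $p_k^2$ can ever appear. It then remains only to read off coefficients: the coefficient of each monomial $p_jp_k$ ($j<k$) gives $Q_{jk}=Q_{kj}$, the coefficient of $p_j$ gives $c_j$, and the $p$-free part gives $a$, with the denominator treated identically by setting $f=\mathbf 1$ to recover $\bar Q,\bar c,\bar a$.

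I expect the conceptual content to be entirely contained in the determinant identity and the column operation; the genuine labour, and the main obstacle to a clean by-hand argument, is the final coefficient extraction. Expanding two $4\times 4$ determinants whose entries are products of the $p_i$ and $q_i$ and then sorting the couple dozen bilinear-in-$q$ coefficients into the exact matrices of Eqs.~(\ref{eq:q_bar_matrix})--(\ref{eq:q_matrix_denominator}) is error-prone, so I would carry out and certify this step with a computer-algebra system, which also confirms the overall factor $\tfrac12$ forced by the symmetry $Q=Q^T$.
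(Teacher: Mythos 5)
Your argument is correct, but it follows a genuinely different route from the paper's. The paper proves the theorem by direct computation: it writes out the stationary vector \(v\) solving \(vM=v\) explicitly (each component a polynomial in \(p\) and \(q\) over a common denominator \(\bar v\)), forms the dot product \(v\cdot(R,S,T,P)\), and then sorts the resulting numerator and denominator into cross terms \(p_ip_j\), linear terms and constants to read off \(Q,c,a\) and \(\bar Q,\bar c,\bar a\); the quadratic shape and the vanishing diagonal are only exhibited a posteriori. You instead use the Press--Dyson adjugate/determinant representation \(u_q(p)=\det M'_i(f)/\det M'_i(\mathbf 1)\) together with the column operation that confines the \(p\)-dependence to exactly two columns (the bilinear first column and \(\tilde p=(p_1-1,p_2-1,p_3,p_4)^T\)), so the Leibniz expansion immediately bounds the degree in \(p\) by two and, because the two \(p\)-carrying entries of any permutation term lie in distinct rows, forces \(Q_{jj}=\bar Q_{jj}=0\) and the symmetric \(\tfrac12 pQp^T\) form. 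That is a cleaner structural explanation of \emph{why} the claimed shape holds; what it does not spare you is the final coefficient extraction, which both you and the paper ultimately delegate to symbolic computation. Two shared caveats, neither of which distinguishes your proof from the paper's: the argument needs the genericity assumption that the chain has a unique stationary distribution (so \(\det M'_i(\mathbf 1)\neq 0\), equivalently \(\bar v\neq 0\) in the paper's notation), and in your write-up the replaced column index \(i\) should be taken to be \(4\) so that the column operations on columns \(2\) and \(3\) do not interact with the replaced column.
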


The proof of Theorem~\ref{theorem_one} is given in
Appendix~\ref{appendix:theorem_one}.
Theorem~\ref{theorem_one} can be extended to consider multiple
opponents. The IPD is commonly studied in tournaments and/or Moran Processes
where a strategy interacts with a number of opponents. The payoff of a player in
such interactions is given by the average payoff the player received against
each opponent. More specifically the expected utility of a memory-one strategy
against \(N\) opponents is given by:

\begin{align}\label{eq:tournament_utility}
       & \frac{1}{N} \sum\limits_{i=1} ^ {N} {u_q}^{(i)} (p) =
       \frac{\frac{1}{N} \sum\limits_{i=1} ^ {N} (\frac{1}{2} pQ^{(i)} p^T + c^{(i)} p + a^ {(i)})
       \prod\limits_{\tiny\begin{array}{l} j=1 \\ j \neq i \end{array}} ^
       N (\frac{1}{2} p\bar{Q}^{(j)} p^T + \bar{c}^{(j)} p + \bar{a}^ {(j)})}
       {\prod\limits_{i=1} ^ N (\frac{1}{2} p\bar{Q}^{(i)} p^T + \bar{c}^{(i)} p + \bar{a}^ {(i)})}.
\end{align}

Eq.~\ref{eq:tournament_utility} is the average score (using Eq.~\ref{eq:optimisation_quadratic}) against the set of opponents.

Estimating the utility of a memory-one strategy against any number of opponents
without simulating the interactions is the main result used in the rest of this manuscript.
It will be used to obtain best response memory-one strategies in tournaments
in order to explore the limitations of extortion
and restricted memory.

\section{Results}\label{section:results}

The formulation as presented in Theorem~\ref{theorem_one} can be used to
define \textit{memory-one best response} strategies as a multi dimensional
optimisation problem given by:

\begin{equation}\label{eq:mo_tournament_optimisation}
    \begin{aligned}
    \max_p: & \ \sum_{i=1} ^ {N} {u_q}^{(i)} (p)
    \\
    \text{such that}: & \ p \in \R_{[0, 1]}
    \end{aligned}
\end{equation}

Optimising this particular ratio of quadratic forms is not trivial. It can be
verified empirically for the case of a single opponent that there exists at
least one point for which the definition of concavity does not hold.
The non concavity of \(u(p)\) indicates multiple local
optimal points. This is also intuitive. The best response against a cooperator,
\(q=(1, 1, 1, 1)\), is a defector \(p^*=(0, 0, 0, 0)\). The strategies
\(p=(\frac{1}{2}, 0, 0, 0)\) and \(p=(\frac{1}{2}, 0, 0, \frac{1}{2})\) are also
best responses. The approach taken here is to introduce a compact way of
constructing the discrete candidate set of all local optimal points, and evaluating
the objective function Eq.~\ref{eq:tournament_utility}. This gives the best
response memory-one strategy. The approach is given in
Theorem~\ref{memone_group_best_response}.

\begin{theorem}\label{memone_group_best_response}

    The optimal behaviour of a memory-one strategy player \(p^* \in \R_{[0, 1]} ^
    4\) against a set of \(N\) opponents \(\{q^{(1)}, q^{(2)}, \dots, q^{(N)} \}\)
    for \(q^{(i)} \in \R_{[0, 1]} ^ 4\) is given by:

    \[p^* = \textnormal{argmax}\sum\limits_{i=1} ^ N  u_q(p), \ p \in S_q.\]

    The set \(S_q\) is defined as all the possible combinations of:

    {\scriptsize
    \begin{equation}\label{eq:s_q_set}
        S_q =
        \left\{p \in \mathbb{R} ^ 4 \left|
            \begin{aligned}
                \bullet\quad p_j \in \{0, 1\} & \quad \text{and} \quad \frac{d}{dp_k}
                \sum\limits_{i=1} ^ N  u_q^{(i)}(p) = 0 \\
                & \quad \text{for all} \quad j \in J \quad \&  \quad k \in K  \quad \text{for all} \quad J, K \\
                & \quad \text{where} \quad J \cap K = \O \quad
                \text{and} \quad J \cup K = \{1, 2, 3, 4\}.\\
                \bullet\quad  p \in \{0, 1\} ^ 4
            \end{aligned}\right.
        \right\}.
    \end{equation}
    }

    Note that there is no immediate way to find the zeros of \(\frac{d}{dp}
    \sum\limits_{i=1} ^ N  u_q(p)\) where,

    {\scriptsize
    \begin{align}\label{eq:mo_tournament_derivative}
        \frac{d}{dp} \sum\limits_{i=1} ^ {N} {u_q}^{(i)} (p) & = \displaystyle\sum\limits_{i=1} ^ {N}
        \frac{\left(pQ^{(i)} + c^{(i)}\right) \left(\frac{1}{2} p\bar{Q}^{(i)} p^T + \bar{c}^{(i)} p + \bar{a}^ {(i)}\right)}
        {\left(\frac{1}{2} p\bar{Q}^{(i)} p^T + \bar{c}^{(i)} p + \bar{a}^ {(i)}\right)^ 2}
        - \frac{\left(p\bar{Q}^{(i)} + \bar{c}^{(i)}\right) \left(\frac{1}{2} pQ^{(i)} p^T + c^{(i)} p + a^ {(i)}\right)}
        {\left(\frac{1}{2} p\bar{Q}^{(i)} p^T + \bar{c}^{(i)} p + \bar{a}^ {(i)}\right)^ 2}
    \end{align}
    }

    For \(\frac{d}{dp} \sum\limits_{i=1} ^ N  u_q(p)\) to equal zero then:

    {\scriptsize
    \begin{align}\label{eq:polynomials_roots}
        \displaystyle\sum\limits_{i=1} ^ {N}
        \left(pQ^{(i)} + c^{(i)}\right) \left(\frac{1}{2} p\bar{Q}^{(i)} p^T + \bar{c}^{(i)} p + \bar{a}^ {(i)}\right)
        - \left(p\bar{Q}^{(i)} + \bar{c}^{(i)}\right) \left(\frac{1}{2} pQ^{(i)} p^T + c^{(i)} p + a^ {(i)}\right)
        & = 0, \quad {while} \\
        \displaystyle\sum\limits_{i=1} ^ {N} \frac{1}{2} p\bar{Q}^{(i)} p^T + \bar{c}^{(i)} p + \bar{a}^ {(i)} & \neq 0.
    \end{align}}

\end{theorem}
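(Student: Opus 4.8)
The plan is to read the statement as the first-order optimality characterisation of a maximiser over the box $[0,1]^4$, and to show that every coordinate of a global maximiser is forced either onto a bound in $\{0,1\}$ or into a stationary point of the objective. First I would establish that the objective $f(p):=\sum_{i=1}^N u_q^{(i)}(p)$ is continuous on the compact box $[0,1]^4$. By Theorem~\ref{theorem_one} each summand is a ratio of quadratic forms $U^{(i)}(p)/D^{(i)}(p)$ with $D^{(i)}(p)=\tfrac12 p\bar Q^{(i)}p^T+\bar c^{(i)}p+\bar a^{(i)}$; since $u_q^{(i)}(p)$ is an expected payoff it is bounded between the least and greatest entry of $(R,S,T,P)$, so any zero of $D^{(i)}$ is a removable singularity and $f$ extends continuously to all of $[0,1]^4$, while on the open region $\{p: D^{(i)}(p)\neq0 \ \forall i\}$ it is smooth with gradient given by Eq.~\ref{eq:mo_tournament_derivative}. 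Compactness then yields a global maximiser $p^*\in[0,1]^4$.

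Next I would apply the coordinate-wise optimality condition at $p^*$. Setting $J=\{j:p^*_j\in\{0,1\}\}$ and $K=\{1,2,3,4\}\setminus J$ gives a partition with $J\cap K=\emptyset$ and $J\cup K=\{1,2,3,4\}$. For each $k\in K$ the value $p^*_k$ is interior to $(0,1)$, so the restriction $t\mapsto f(p^*+te_k)$ has an unconstrained local maximum at $t=0$ and therefore $\tfrac{\partial}{\partial p_k}f(p^*)=0$; for each $j\in J$ we have $p^*_j\in\{0,1\}$ by definition. These are exactly the two clauses of the first bullet in Eq.~\ref{eq:s_q_set}, so $p^*\in S_q$, the all-bound partition $K=\emptyset$ recovering the second bullet $p\in\{0,1\}^4$. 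Because $S_q\subseteq[0,1]^4$ and the global maximiser $p^*$ itself lies in $S_q$, maximising over $S_q$ returns the same optimum, i.e. $p^*=\mathrm{argmax}_{p\in S_q}\sum_i u_q^{(i)}(p)$.

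To justify the reduction to the polynomial conditions Eq.~\ref{eq:polynomials_roots}, I would note that on the non-degenerate region each gradient component is a ratio of polynomials, hence vanishes precisely when its numerator vanishes and its denominator does not; clearing the nonzero denominator converts each stationarity equation $\tfrac{\partial}{\partial p_k}f=0$ into a polynomial identity in $p$ of the form stated. Enumerating $S_q$ then amounts to sweeping over every partition $(J,K)$ and every one of the $2^{|J|}$ bound assignments and solving the resulting polynomial system for the free coordinates $p_k$, $k\in K$.

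I expect the main obstacle to be the regularity of $f$ at the degenerate locus where some $D^{(i)}(p)$ vanishes, that is, where the induced Markov chain of Eq.~\ref{eq:transition_matrix} lacks a unique stationary distribution. Boundedness of the expected payoff shows these are removable singularities that cannot host a spurious maximum, but establishing that the formal ratio of quadratic forms defines a genuinely continuous (and, away from this locus, differentiable) function is the delicate part; once secured, the optimality argument is entirely standard. The residual difficulty is the one the statement itself flags: the systems of Eq.~\ref{eq:polynomials_roots} admit no general closed-form solution, so $S_q$ must be constructed numerically, partition by partition.
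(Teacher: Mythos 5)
Your proposal is correct and follows essentially the same route as the paper's own proof: a first-order optimality analysis over the box $[0,1]^4$, partitioning coordinates into those fixed at a bound ($J$) and those interior ($K$) where the partial derivatives must vanish, then clearing the (nonzero) denominators to obtain the polynomial system of Eq.~\ref{eq:polynomials_roots}. Your additional care about existence of the maximiser via compactness and about the degenerate locus where some denominator $\frac{1}{2}p\bar{Q}^{(i)}p^T+\bar{c}^{(i)}p+\bar{a}^{(i)}$ vanishes goes beyond what the paper addresses, but does not change the argument.
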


The proof of Theorem~\ref{memone_group_best_response} is given in
Appendix~\ref{appendix:memone_group_best_response}.
Finding best response memory-one strategies is analytically feasible using the
formulation of Theorem~\ref{memone_group_best_response} and resultant
theory~\cite{Jonsson2005}. However, for large systems building the resultant
becomes intractable. As a result, best responses will be estimated
heuristically using a numerical method, suitable for problems with local optima,
called Bayesian optimisation~\cite{Mokus1978}.

\subsection{Limitations of extortionate behaviour}

In multi opponent settings,
where the payoffs matter, strategies trying to exploit their opponents will
suffer.
Compared to ZDs, best response memory-one strategies, which have a
theory of mind of their opponents, utilise their behaviour in order to gain the
most from their interactions. The question that arises then is whether best
response strategies are optimal because they behave in an extortionate way.

The results of this section use Bayesian optimisation to generate a data set of best response
memory-one strategies for \(N=2\) opponents.
The data set is available at~\cite{glynatsi2019}. It contains a total of 1000 trials
corresponding to 1000 different instances of a best response strategy in
tournaments with \(N=2\). For each trial a set of 2 opponents is
randomly generated and the memory-one best response against them is found.
In order to investigate whether best responses
behave in an extortionate matter the SSE method described in ~\cite{Knight2019} is used. More
specifically,
in~\cite{Knight2019} the point \(x^*\), in the space of memory-one strategies,
that is
the nearest extortionate strategy to a given strategy \(p\) is
given by,

\begin{equation}\label{eqn:x_star_formula}
    x^* = {\left(C^{T}C\right)}^{-1}C^{T}\bar{p}
\end{equation}

where \(\bar{p}=(p_1 - 1, p_2 - 1, p_3, p_4)\) and

\begin{equation}\label{eq:definition_of_C}
    C =
    \begin{bmatrix}
        R - P & R- P \\
        S - P & T- P \\
        T - P & S- P \\
        0     & 0 \\
    \end{bmatrix}.
\end{equation}

Once this closest ZDs is found, the squared norm of the remaining error is referred to as sum of squared errors
of prediction (SSE):

\begin{equation}\label{eqn:x_SSError_formula}
    \text{SSE} = {\bar{p}} ^ T \bar{p} -
           \bar{p} C \left(C ^ T C \right) ^ {-1} C ^ T \bar{p}
         = {\bar{p}} ^ T \bar{p} - \bar{p} C x ^ *
\end{equation}

Thus, SSE is defined as how far a strategy is from behaving as a ZD. A
high SSE implies a non extortionate behaviour.
The distribution of SSE for the best response in tournaments
(\(N=2\)) is given in
Figure~\ref{fig:sse_distributions}. Moreover, a statistical summary of the SSE
distribution is given in Table~\ref{table:sserror_stats}.

\begin{table}[!htbp]
    \begin{center}
    \resizebox{.65\columnwidth}{!}{%
    \begin{tabular}{lrrrrrrrrrrr}
    \toprule
    mean & std  & 5\% & 50\% &  95\% & max & median & skew & kurt\\
    \midrule
    0.34  & 0.40  & 0.028  & 0.17  & 1.05  & 2.47  & 0.17  & 1.87 & 3.60 \\
    \end{tabular}}
    \end{center}
    \caption{SSE of best response memory-one when \(N=2\)}\label{table:sserror_stats}
\end{table}

\begin{figure}[!htbp]
    \begin{center}
        \includegraphics[width=.5\linewidth]{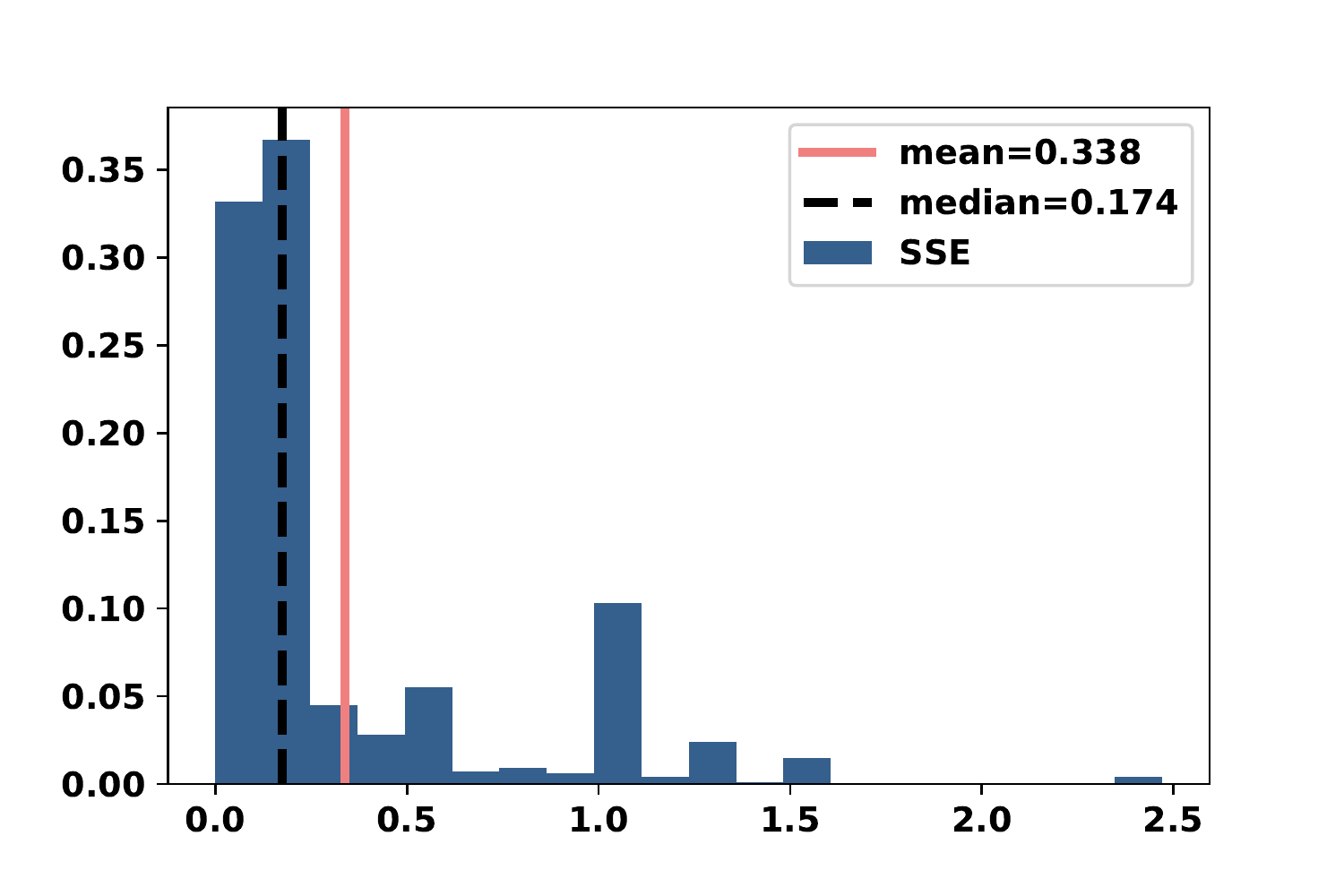}
    \end{center}
    \caption{SEE distribution for best response in tournaments with \(N=2\).}\label{fig:sse_distributions}
\end{figure}

For the best response in tournaments with \(N=2\) the
distribution of SSE is skewed to the left, indicating that the best response
does exhibit ZDs behaviour and so could be extortionate, however, the best
response is not uniformly a ZDs. A positive measure of skewness and kurtosis,
and a mean of 0.34 indicate a heavy tail to the right. Therefore, in several
cases the strategy is not trying to extort its opponents. This highlights the importance of adaptability since the best response strategy against an opponent is rarely (if ever) a unique ZDs.

\subsection{Limitations of memory size}

The other main finding presented in~\cite{Press2012} was that
short memory of the strategies was all that was needed.
We argue that the second limitation of ZDs in multi opponent
interactions is that of their restricted memory.
To demonstrate the effectiveness of memory in the IPD we explore a best response
longer-memory strategy against a given set of memory-one opponents,  and compare
it's performance to that of a memory-one best response.

In~\cite{Harper2017}, a strategy called \textit{Gambler} which makes
probabilistic decisions based on the opponent's \(n_1\) first moves, the
opponent's \(m_1\) last moves and the player's \(m_2\) last moves was
introduced. In this manuscript Gambler with parameters: $n_1 = 2, m_1 = 1$ and $m_2 = 1$ is used
as a longer-memory strategy.
By considering the opponent's first two moves, the opponents last move and the
player's last move, there are only 16 $(4 \times 2 \times 2)$ possible outcomes
that can occur, furthermore, Gambler also makes a probabilistic decision of
cooperating in the opening move. Thus, Gambler is a function \(f: \{\text{C,
D}\} \rightarrow [0, 1]_{\R}\). This can be hard coded as an element
of \([0, 1]_{\R} ^ {16 + 1}\), one probability for each outcome plus the opening
move. Hence, compared to Eq.~\ref{eq:mo_tournament_optimisation}, finding an
optimal Gambler is a 17 dimensional problem given by:

\begin{equation}\label{eq:gambler_optimisation}
    \begin{aligned}
    \max_p: & \ \sum_{i=1} ^ {N} {U_q}^{(i)} (f)
    \\
    \text{such that}: & \ f \in \R_{[0, 1]}^{17}
    \end{aligned}
\end{equation}

Note that Eq. \ref{eq:tournament_utility} can not be used here for the utility
of Gambler, and actual simulated players are used. This is done using~\cite{axelrodproject}
with 500 turns and 200 repetitions, moreover, Eq. \ref{eq:gambler_optimisation}
is solved numerically using Bayesian optimisation.

Similarly to the previous section, a large data set has been generated with
instances of an optimal Gambler and a memory-one best response, available
at~\cite{glynatsi2019}. Estimating a best response Gambler (17 dimensions) is
computational more expensive compared to a best response memory-one (4
dimensions). As a result, the analysis of this section is based on a total of
152 trials. As before, for each trial \(N=2\) random opponents have been selected.

The ratio between Gambler's utility and the best response memory-one strategy's utility has been calculated and its distribution in
given in Fig.~\ref{fig:utilities_gambler_mem_one}.
It is evident from Fig.~\ref{fig:utilities_gambler_mem_one} that
Gambler always performs as well as the best response memory-one strategy and often performs better. There are
no points where the ratio value is less than 1, thus Gambler never performed less
than the best response memory-one strategy and in places outperforms it.
However, against two memory-one opponents Gambler's performance is better than
the optimal memory-one strategy. This is evidence that in the case of multiple
opponents, having a
shorter memory is limiting.

\begin{figure}[!htbp]
    \centering
    \includegraphics[width=.5\textwidth]{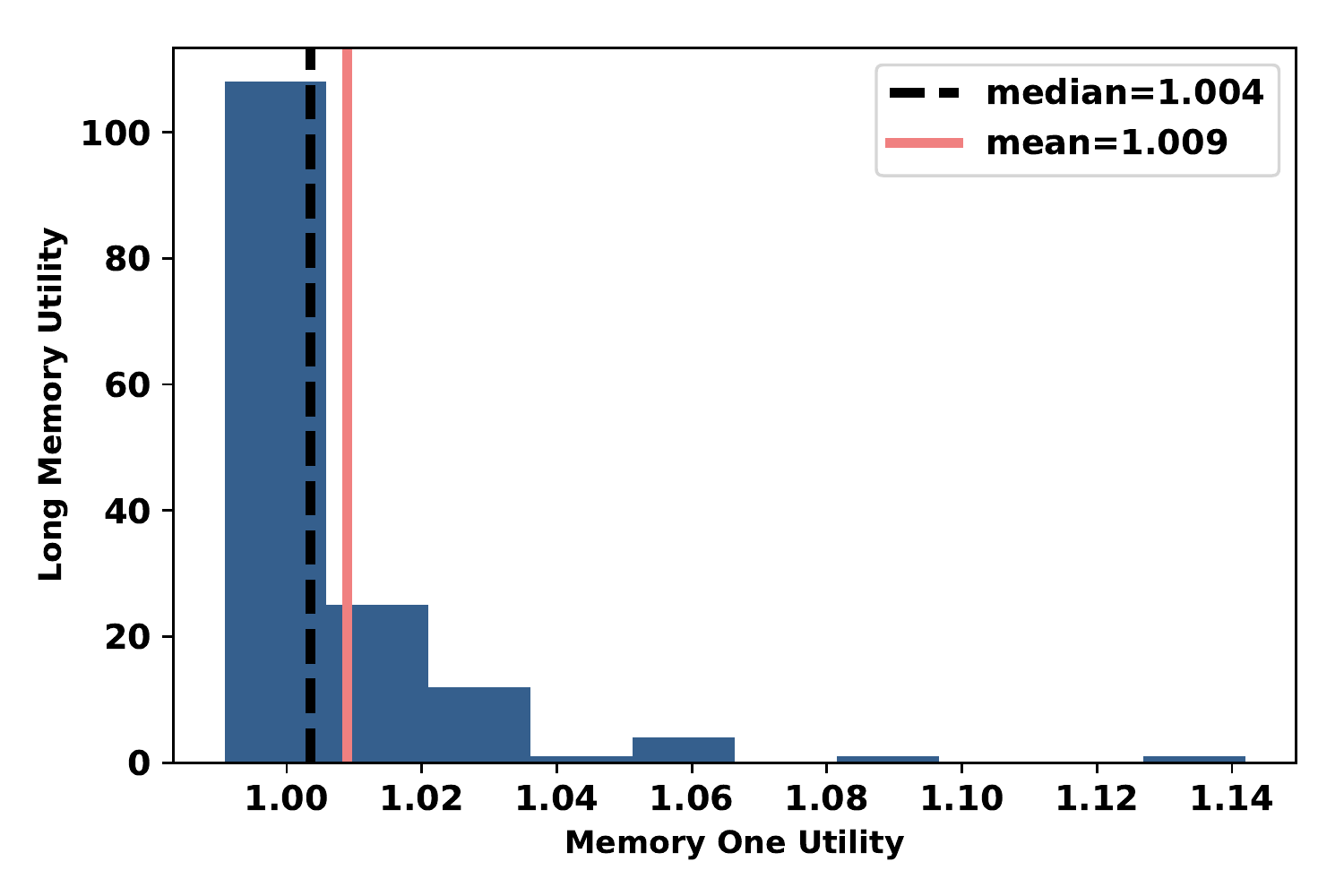}
    \caption{The ratio between the utilities of Gambler and best response memory-one
    strategy for 152 different pair of opponents.}\label{fig:utilities_gambler_mem_one}
\end{figure}

\subsection{Dynamic best response player}

In several evolutionary settings
such as Moran Processes self interactions are key. Previous work has
identified interesting results such as the appearance of self recognition
mechanisms when training strategies using evolutionary algorithms in Moran
processes~\cite{Knight2018}. This aspect of reinforcement learning can be done for
best response memory-one strategies, as presented in this manuscript, by incorporating the strategy itself in the
objective function as shown in Eq.~\ref{eq:mo_tournament_optimisation}.
Where \(K\) is the number of self interactions that will take place.

\begin{equation}\label{eq:mo_evolutionary_optimisation}
\begin{aligned}
\max_p: & \ \frac{1}{N} \sum\limits_{i=1} ^ {N} {u_q}^{(i)} (p) + Ku_p(p)
\\
\text{such that}: & \ p \in \R_{[0, 1]}
\end{aligned}
\end{equation}

For determining the memory-one best response with self interactions, an
algorithmic approach called \textit{best response dynamics} is proposed. The
best response dynamics approach used in this manuscript is given by
Algorithm~\ref{algo:best_response_dynamics}.

\begin{center}
\begin{minipage}{.55\textwidth}
\begin{algorithm}[H]
       $p^{(t)}\leftarrow (1, 1, 1, 1)$\;
       \While{$p^{(t)} \neq p ^{(t -1)}$}{
       $p^{(t + 1)} =  \text{argmax} \frac{1}{N} \sum\limits_{i=1} ^ {N} {u_q}^{(i)}
       (p^{(t)}) + Ku_{p^{(t)}}(p^{(t)})$\;
       }
       \caption{Best response dynamics Algorithm}
       \label{algo:best_response_dynamics}
\end{algorithm}
\end{minipage}
\end{center}

To investigate the effectiveness of this approach, more formally a Moran process
will be considered. If a population of \(n\) total individuals of two types is
considered, with \(K\) individuals of the first type and \(n-K\) of the second
type.
The probability that the individuals of the first type will take over the
population (the fixation probability) is denoted by \(x_K\) and is known to
be~\cite{nowak2006evolutionary}:

\[ x_K = \frac{ 1 + \sum_{j=1}^{K-1}\prod_{i=1}^j\gamma_i }{ 1 +
\sum_{j=1}^{n-1}\prod_{i=1}^j\gamma_i } \]

where:

\[ \gamma_i = \frac{ p_{K, K - 1} }{ p_{K, K + 1} }. \]

To evaluate the formulation proposed here the best response player (taken to be
the first type of individual in our population)  will be allowed to act
dynamically: adjusting their probability vector at every generation. In essence
using the theory of mind to find the best response to not only the opponent but
also the distribution of the population. Thus for every value of \(K\) there is
a different best response player.

Considering the dynamic best response player as a vector
\(p\in\mathbb{R}^4_{[0, 1]}\) and the opponent as a vector
\(q\in\mathbb{R}^4_{[0, 1]}\), the transition probabilities depend on
the payoff matrix \(A ^ {(K)}\) where:

\begin{itemize}
    \item \(A ^ {(K)}_{11}=u_{p}(p)\) is the long run utility of the best response player against itself.
    \item \(A ^ {(K)}_{12}=u_{q}(p)\) is the long run utility of the best response player against the opponent.
    \item \(A ^ {(K)}_{11}=u_{p}(q)\) is the long run utility of the opponent against the best response player.
    \item \(A ^ {(K)}_{11}=u_{q}(q)\) is the long run utility of the opponent against itself.
\end{itemize}

The matrix \(A ^ {(K)}\) is calculated using Eq.~\ref{eq:optimisation_quadratic}.
For every value of \(K\) the best response dynamics algorithm
(Algorithm~\ref{algo:best_response_dynamics})
is used to
calculate the best response player.

The total utilities/fitnesses for each player can be written down:

\[f_1^{(K)} = (K - 1) A_{11}^{(K)} + (n - K)A_{12}^{(K)}\]

\[f_2^{(K)} = (K) A_{21}^{(K)} + (n - K - 1)A_{22}^{(K)}\]

where \(f_1^{(K)}\) is the fitness of the best response player, and \(f_2^{(K)}\) is
the fitness of the opponent.

Using this:

\[ p_{K, K - 1} = \frac{ (n - K)f_2^{(K)} }{ Kf_1^{(K)}+(n - K)f_2^{(K)} } \frac{ K }{ n } \]

and:

\[ p_{K, K + 1} = \frac{ Kf_1^{(K)} }{ Kf_1^{(K)}+(n - K)f_2^{(K)} } \frac{ (n - K) }{ n } \]

which are all that are required to calculate \(x_K\).

Figure~\ref{fig:dynamic_moran_process_results} shows the results of an analysis
of \(x_K\) for dynamically updating players.
This is obtained over 182
Moran process against 122 randomly selected opponents.
For each Moran process the fixation probabilities for \(K\in\{1, 2, 3\}\) are
collected.
As well as recording \(x_K\), \(\tilde x_K\) is measured where \(\tilde
x_K\) represents the fixation probability of the best response player
calculated for a given \(K\) but not allowing it to dynamically update as the
population changes. The ratio \(\frac{x_K}{\tilde x_K}\) is included in the
Figure. This is done to be able to compare to a high performing strategy that
has a theory of mind of the opponent but not of the population density.
The ratio shows a relatively high performance compared to a non
dynamic best response strategy. The mean ratio over all values of \(K\) and all
experiments is \(1.044\).
In some cases this dynamic updating results in a 25\% increase in the
absorption probability.

As denoted before it is clear that the best response strategy in general does not have a
low SSE (only 25\% of the data is below .923 and the average is .454) this is
further compounded by the ratio being above one showing that in many
cases the dynamic strategy benefits from its ability to adapt.
This indicates that memory-one strategies that perform well in Moran processes need to
more adaptable than a ZDs, and aim for mutual cooperation as well as
exploitation which is in line with the results of~\cite{Hilbe2018} where their
strategy was designed to adapt and was shown to be evolutionary stable. The
findings of this work show that an optimal strategy acts in the same way.

\begin{figure}[!hbtp]
    \centering
    \includegraphics[width=.9\textwidth]{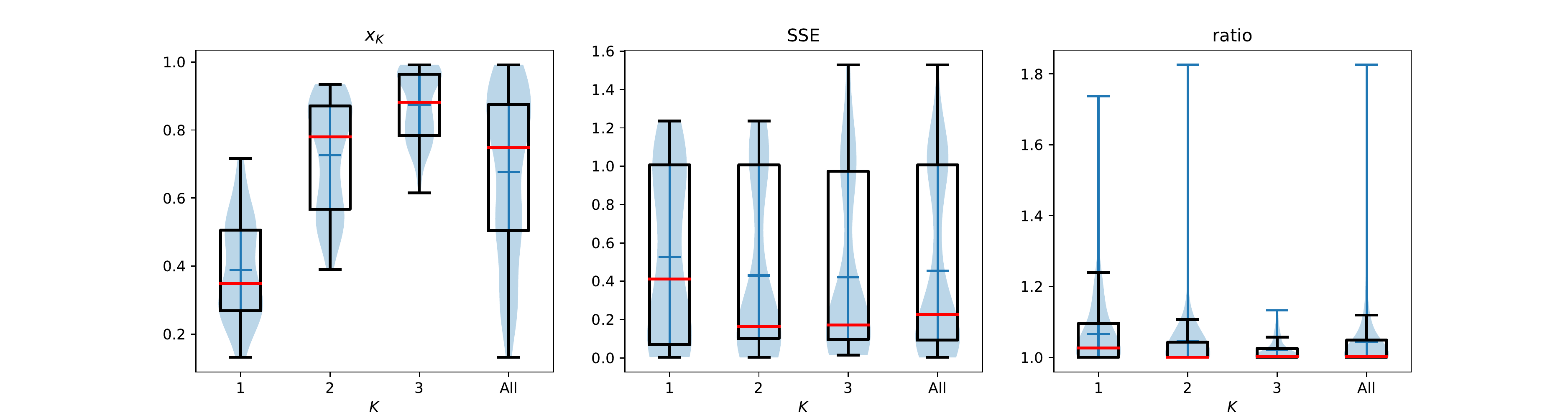}
    \caption{Results for the best response player in a dynamic Moran process.
    The ratio is taken as the ratio of \(x_k\) of the dynamically updating
    player to the fixation probability of a best response player that does not
    update as the population density changes.}\label{fig:dynamic_moran_process_results}
\end{figure}

\section{Discussion}

This manuscript has considered \textit{best response} strategies in the IPD game, and
more specifically, \textit{memory-one best responses}. It has proven that:

\begin{itemize}
    \item The utility
          of a memory-one strategy against a set of memory-one opponents can be written as a sum
          of ratios of quadratic forms (Theorem~\ref{theorem_one}).
    \item There is a compact way of identifying a memory-one best response to a
        group of opponents through a search over a discrete set
        (Theorem~\ref{memone_group_best_response}).
\end{itemize}

There is one further theoretical result that can be obtained from Theorem~\ref{theorem_one},
which allows the identification of
environments for which cooperation cannot occur (Details are in the
Appendix\ref{appendix:stability_of_defection}). Moreover,
Theorem~\ref{memone_group_best_response} does not only have game theoretic
novelty, but also the mathematical novelty of solving quadratic ratio
optimisation problems where the quadratics are non concave.

The empirical results of the manuscript investigated the behaviour of memory-one
strategies and their limitations. The empirical results have shown that the
performance of memory-one strategies rely on adaptability and not on extortion,
and that memory-one strategies' performance is limited by their memory in cases
where they interact with multiple opponents. These relied on two bespoke
data sets of 1000 and 152 pairs of memory-one opponents equivalently, archived
at~\cite{glynatsi2019}.

A further set of results for Moran processes with a dynamically updating best
response player was generated and is archived in~\cite{nikoleta_glynatsi_2020}.
This confirmed the previous results which is that high performance
requires adaptability and not extortion. It also provides a framework for future
stability of optimal behaviour in evolutionary settings.

In the interactions we have considered here the players do not make mistakes; their
actions were executed with perfect accuracy. Mistakes, however, are relevant in
the reasearch of repeated games~\cite{Boyd1989, Imhof2007, Nowak1993, Wu1995}. In
future work we would consider interactions with ``noise''. Noise can be
incoroporated into our formulation and it can be shown that the utility remains a ratio
of quadratic forms (Details see Appendix~\ref{appendix:noise}).
Another avenue of investigation would be to understand if and/or when an
evolutionary trajectory leads to a best response strategy.

By specifically exploring the entire space of memory-one strategies to identify
the best strategy for a variety of situations, this work adds to the literature
casting doubt
on the effectiveness of ZDs, highlights the importance of adaptability and provides
a framework for the continued understanding of these important questions.

\section{Acknowledgements}

A variety of software libraries have been used in this work:

\begin{itemize}
    \item The Axelrod library for IPD simulations~\cite{axelrodproject}.
    \item The Scikit-optimize library for an implementation of Bayesian optimisation~\cite{tim_head_2018_1207017}.
    \item The Matplotlib library for visualisation~\cite{hunter2007matplotlib}.
    \item The SymPy library for symbolic mathematics~\cite{sympy}.
    \item The Numpy library for data manipulation~\cite{walt2011numpy}.
\end{itemize}

\bibliographystyle{plain}
\bibliography{bibliography.bib}

\section{Appendix}

\subsection{Theorem~\ref{theorem_one} Proof}\label{appendix:theorem_one}

The utility of a memory one player \(p\) against an opponent \(q\), \(u_q(p)\),
can be written as a ratio of two quadratic forms on \(R^{4}_[0, 1]\).

\begin{proof}

    It was discussed that \(u_q(p)\) it is the product of the steady state
    vector \(v\) and the PD payoffs,

    \[u_q(p) = v \cdot (R, S, T, P).\]

    The steady state vector which is the solution to \(vM = v\) is given by
    \begingroup
    \tiny
    \begin{equation*}
    \begin{split}
        v =  & \left[ \frac{p_{2} p_{3} (q_{2} q_{4} - q_{3} q_{4}) + p_{2} p_{4} (q_{2} q_{3} - q_{2} q_{4} - q_{3} + q_{4}) +
        p_{3} p_{4} (- q_{2} q_{3} + q_{3} q_{4}) - p_{3} q_{2} q_{4} + p_{4}q_{4} (q_{2} - 1)}{\bar{v}} \right., \\
        & \left. \frac{p_{1} p_{3} (q_{1} q_{4} - q_{2} q_{4}) + p_{1} p_{4} (- q_{1} q_{2} + q_{1} + q_{2} q_{4} -
        q_{4}) + p_{3} p_{4} (q_{1} q_{2} - q_{1} q_{4} - q_{2} + q_{4}) + p_{3}q_{4} (q_{2} - 1) -
         p_{4} q_{2} (q_{4} + 1) + p_{4} (q_{4} - 1)}{\bar{v}} \right., \\
        & \left. \frac{- p_{1} p_{2} (q_{1} q_{4} - q_{3} q_{4}) - p_{1} p_{4} (- q_{1} q_{3} + q_{3} q_{4})
          + p_{1} q_{1} q_{4} - p_{2} p_{4} (q_{1} q_{3} - q_{1} q_{4} - q_{3} + q_{4}) -
          p_{2} q_{4} (q_{3}  + 1) - p_{4}q_{4} (q_{1} + q_{3}) - p_{4} (q_{3}
          + q_{4}) - q_{4}}{\bar{v}} \right., \\ 
        & \left. \frac{p_{1} p_{2} (q_{1} q_{2} - q_{1} - q_{2} q_{3} + q_{3}) + p_{1} p_{3} (- q_{1} q_{3} + q_{2} q_{3})
         - p_{1} q_{1} (q_{2} + 1) + p_{2} p_{3} (- q_{1} q_{2} + q_{1} q_{3}
         + q_{2} - q_{3}) + p_{2} (q_{3}q_{2}  - q_{2} - q_{3} - 1) +
          p_{3} (q_{1} q_{2} - q_{3}q_{2} - q_{2} - q_{3}) + q_{2} - 1}{\bar{v}}\right],
    \end{split}
    \end{equation*}
    \endgroup

    where,
    \begingroup
    \footnotesize
    \begin{equation*}
        \begin{split}
           \bar{v} = & \quad p_{1} p_{2} (q_{1} q_{2} - q_{1} q_{4} - q_{1} - q_{2} q_{3} + q_{3} q_{4} + q_{3}) - p_{1} p_{3} (q_{1} q_{3} - q_{1} q_{4} - q_{2} q_{3} + q_{2} q_{4}) -
           p_{1} p_{4} (q_{1} q_{2} - q_{1} q_{3} - q_{1} - q_{2} q_{4} + q_{3} q_{4} + q_{4}) - \\
           & \quad p_{1} q_{1} (q_{2} + q_{4} + 1) + p_{2} p_{3} (- q_{1} q_{2} + q_{1} q_{3} + q_{2} q_{4} + q_{2} - q_{3} q_{4} - q_{3})
           + p_{2} p_{4} (- q_{1} q_{3} + q_{1} q_{4} + q_{2} q_{3} - q_{2} q_{4}) + p_{2} q_{2} (q_{3} - 1) - p_{2} q_{3} (q_{4} - 1) + \\
           & \quad p_{2} (q_{4} + 1) +  p_{3} p_{4} (q_{1} q_{2} - q_{1} q_{4} - q_{2} q_{3} - q_{2} + q_{3} q_{4} + q_{4}) + p_{3} q_{2} q_{1} ( - p_{3} - 1) + p_{3} (q_{3} -
           q_{4}) - p_{4} (q_{1} q_{4} + q_{2} + q_{3} q_{4} - q_{3} + q_{4} - 1) + \\
           & \quad q_{2} - q_{4} - 1
        \end{split}
        \end{equation*}
    \endgroup

    The dot product of \(v \cdot (R, S, T, P)\) gives,

    \begingroup
    \scriptsize
    \begin{equation*}
    \begin{split}
        u_q(p) = & \frac{R \left(p_{2} p_{3} (q_{2} q_{4} - q_{3} q_{4}) + p_{2} p_{4} (q_{2} q_{3} - q_{2} q_{4} - q_{3} + q_{4}) +
        p_{3} p_{4} (- q_{2} q_{3} + q_{3} q_{4}) - p_{3} q_{2} q_{4} + p_{4}q_{4} (q_{2} - 1)\right)}{\bar{v}}  +  \\
        & \frac{S \left(p_{1} p_{3} (q_{1} q_{4} - q_{2} q_{4}) + p_{1} p_{4} (- q_{1} q_{2} + q_{1} + q_{2} q_{4} -
        q_{4}) + p_{3} p_{4} (q_{1} q_{2} - q_{1} q_{4} - q_{2} + q_{4}) + p_{3}q_{4} (q_{2} - 1) -
         p_{4} q_{2} (q_{4} + 1) + p_{4} (q_{4} - 1)\right)}{\bar{v}} + \\
        & \frac{T \left(- p_{1} p_{2} (q_{1} q_{4} - q_{3} q_{4}) - p_{1} p_{4} (- q_{1} q_{3} + q_{3} q_{4})
          + p_{1} q_{1} q_{4} - p_{2} p_{4} (q_{1} q_{3} - q_{1} q_{4} - q_{3} + q_{4}) -
          p_{2} q_{4} (q_{3}  + 1) - p_{4}q_{4} (q_{1} + q_{3}) - p_{4} (q_{3}
          + q_{4}) - q_{4}\right)}{\bar{v}} + \\ 
        & \frac{P \left(p_{1} (p_{2} (q_{1} q_{2} - q_{1} - q_{2} q_{3} + q_{3}) + p_{3} (- q_{1} q_{3} + q_{2} q_{3})
        - q_{1} (q_{2} + 1)) + p_{2} p_{3} ((- q_{1} q_{2} + q_{1} q_{3}
        + q_{2} - q_{3}) + (q_{3}q_{2}  - q_{2} - q_{3} - 1))\right)}{\bar{v}} + \\
        & \frac{P \left(p_{3} (q_{1} q_{2} - q_{3}q_{2} - q_{2} - q_{3}) + q_{2} - 1\right)}{\bar{v}} \implies \\
    \end{split}
    \end{equation*}
    \endgroup

    \begingroup
    \scriptsize
    \begin{equation*}
        u_q(p) =
        \left(
          \frac
            {\parbox{6in}{$ - p_{1} p_{2} (q_{1} - q_{3}) (P q_{2} - P - T q_{4}) + p_{1} p_{3} (q_{1} - q_{2}) (P q_{3} - S q_{4}) + p_{1} p_{4} (q_{1} - q_{4}) (S q_{2} - S - T q_{3}) + p_{2} p_{3} (q_{2} - q_{3}) (P q_{1} - P - R q_{4}) - $ \\
            $ p_{2} p_{4} (q_{3} - q_{4}) (R q_{2} - R - T q_{1} + T) + p_{3} p_{4} (q_{2} - q_{4}) (R q_{3} - S q_{1} + S) + p_{1} q_{1} (P q_{2} - P - T q_{4}) - p_{2} (q_{3} - 1) (P q_{2} - P - T q_{4}) + $ \\
            $ p_{3} (- P q_{1} q_{2} + P q_{2} q_{3} + P q_{2} - P q_{3} + R q_{2} q_{4} - S q_{2} q_{4} + S q_{4}) + p_{4} (- R q_{2} q_{4} + R q_{4} + S q_{2} q_{4} - S q_{2} - S q_{4} + S + T q_{1} q_{4} - T q_{3} q_{4} + T q_{3} - T q_{4}) $ \\
            \hspace*{6.7cm} $- P q_{2} + P + T q_{4}$
            }}
            {\parbox{6in}{$
            p_{1} p_{2} (q_{1} q_{2} - q_{1} q_{4} - q_{1} - q_{2} q_{3} + q_{3} q_{4} + q_{3}) + p_{1} p_{3} (- q_{1} q_{3} + q_{1} q_{4} + q_{2} q_{3} - q_{2} q_{4}) + p_{1} p_{4} (- q_{1} q_{2} + q_{1} q_{3} + q_{1} + q_{2} q_{4} - q_{3} q_{4} - q_{4}) +$ \\
            $ p_{2} p_{3} (- q_{1} q_{2} + q_{1} q_{3} + q_{2} q_{4} + q_{2} - q_{3} q_{4} - q_{3}) + p_{2} p_{4} (- q_{1} q_{3} + q_{1} q_{4} + q_{2} q_{3} - q_{2} q_{4}) + p_{3} p_{4} (q_{1} q_{2} - q_{1} q_{4} - q_{2} q_{3} - q_{2} + q_{3} q_{4} + q_{4}) + $ \\
            $ p_{1} (- q_{1} q_{2} + q_{1} q_{4} + q_{1}) + p_{2} (q_{2} q_{3} - q_{2} - q_{3} q_{4} - q_{3} + q_{4} + 1) + p_{3} (q_{1} q_{2} - q_{2} q_{3} - q_{2} + q_{3} - q_{4}) + p_{4} (- q_{1} q_{4} + q_{2} + q_{3} q_{4} - q_{3} + q_{4} - 1) + $ \\
            \hspace*{7cm} $q_{2} - q_{4} - 1$
          }}
        \right).
    \end{equation*}
    \endgroup

    Let us consider the numerator of \(u_q(p)\). The cross product terms
    \(p_ip_j\) are given by,

    \begingroup
    \footnotesize
    \begin{align*}
    - p_{1} p_{2} (q_{1} - q_{3}) (P q_{2} - P - T q_{4}) + p_{1} p_{3} (q_{1} - q_{2}) (P q_{3} - S q_{4}) + p_{1} p_{4} (q_{1} - q_{4}) (S q_{2} - S - T q_{3}) + \\
    p_{2} p_{3} (q_{2} - q_{3}) (P q_{1} - P - R q_{4}) - p_{2} p_{4} (q_{3} - q_{4}) (R q_{2} - R - T q_{1} + T) + p_{3} p_{4} (q_{2} - q_{4}) (R q_{3} - S q_{1} + S)
    \end{align*}
    \endgroup

    This can be re written in a matrix format given by
    Eq.~\ref{eq:cross_product_coeffs}.

    \begin{equation}\label{eq:cross_product_coeffs}
        \resizebox{0.9\linewidth}{!}{\arraycolsep=2.5pt%
        \boldmath\(
        (p_1, p_2, p_3, p_4) \frac{1}{2}  \begin{pmatrix}
        p_1 \\
        p_2 \\
        p_3 \\
        p_4 \end{pmatrix}
        \) }
    \end{equation}

    Similarly, the linear terms are given by,

    \begingroup
    \footnotesize
    \begin{align*}
    p_{1} q_{1} (P q_{2} - P - T q_{4}) - p_{2} & (q_{3} - 1) (P q_{2} - P - T q_{4}) + p_{3} (- P q_{1} q_{2} + P q_{2} q_{3} + P q_{2} - P q_{3} + R q_{2} q_{4} - S q_{2} q_{4} + S q_{4}) + \\
    p_{4} & (- R q_{2} q_{4} + R q_{4} + S q_{2} q_{4} - S q_{2} - S q_{4} + S + T q_{1} q_{4} - T q_{3} q_{4} + T q_{3} - T q_{4})
    \end{align*}
    \endgroup

    and the expression can be written using a matrix format as
    Eq.~\ref{eq:linear_coeffs}.

    \begin{equation}\label{eq:linear_coeffs}
        \resizebox{0.60\linewidth}{!}{\arraycolsep=2.5pt%
        \boldmath\(
        (p_1, p_2, p_3, p_4) \)}
    \end{equation}

    Finally, the constant term of the numerator, which is obtained by
    substituting $p=(0, 0, 0, 0)$, is given by Eq.~\ref{eq:constant}.

    \begin{equation}\label{eq:constant}
    - P q_{2} + P + T q_{4}
    \end{equation}

    Combining Eq.~\ref{eq:cross_product_coeffs}, Eq.~\ref{eq:linear_coeffs} and
    Eq.~\ref{eq:constant} gives that the numerator of \(u_q(p)\) can be written
    as,

    \begingroup
    \tiny\boldmath
    \begin{align*}
        \frac{1}{2}p &  p^T +  \\
        &  p - P q_{2} + P + T q_{4}
    \end{align*}
    \endgroup

    and equivalently as,

    \[\frac{1}{2}pQp^T + cp + a\]

    where \(Q\) \(\in \R^{4\times4}\) is a square matrix defined by the
    transition probabilities of the opponent \(q_1, q_2, q_3, q_4\) as follows:

    \begin{equation*}
        \resizebox{0.9\linewidth}{!}{\arraycolsep=2.5pt%
        \boldmath\(
        Q = \)},
    \end{equation*}

    \(c\) \(\in \R^{4 \times 1}\) is similarly defined by:

    \begin{equation*}
        \resizebox{0.55\linewidth}{!}{\arraycolsep=2.5pt%
        \boldmath\(c = \),}
    \end{equation*}

    and \(a = \).

    The same process is done for the denominator.
\end{proof}

\subsection{Theorem~\ref{memone_group_best_response} Proof}\label{appendix:memone_group_best_response}
The optimal behaviour of a memory-one strategy player \(p^* \in \R_{[0, 1]} ^
4\) against a set of \(N\) opponents \(\{q^{(1)}, q^{(2)}, \dots, q^{(N)} \}\)
for \(q^{(i)} \in \R_{[0, 1]} ^ 4\) is given by:

\[p^* = \textnormal{argmax}\sum\limits_{i=1} ^ N  u_q(p), \ p \in S_q.\]

The set \(S_q\) is defined as all the possible combinations of:

\begin{equation}\label{eq:s_q_set}
    S_q =
    \left\{p \in \mathbb{R} ^ 4 \left|
        \begin{aligned}
            \bullet\quad p_j \in \{0, 1\} & \quad \text{and} \quad \frac{d}{dp_k}
            \sum\limits_{i=1} ^ N  u_q^{(i)}(p) = 0 \\
            & \quad \text{for all} \quad j \in J \quad \&  \quad k \in K  \quad \text{for all} \quad J, K \\
            & \quad \text{where} \quad J \cap K = \O \quad
            \text{and} \quad J \cup K = \{1, 2, 3, 4\}.\\
            \bullet\quad  p \in \{0, 1\} ^ 4
        \end{aligned}\right.
    \right\}.
\end{equation}

\begin{proof}
    The optimisation problem of Eq.~\ref{eq:mo_tournament_optimisation}

    \begin{equation}\label{eq:mo_tournament_optimisation}
        \begin{aligned}
        \max_p: & \ \sum_{i=1} ^ {N} {u_q}^{(i)} (p)
        \\
        \text{such that}: & \ p \in \R_{[0, 1]}
        \end{aligned}
    \end{equation}

    can be written as:

    \begin{equation}\label{eq:mo_tournament_optimisation_standard}
        \begin{aligned}
        \max_p: & \ \sum_{i=1} ^ {N} {u_q}^{(i)} (p)
        \\
        \text{such that}: p_i & \leq 1 \text{ for } \in \{1, 2, 3, 4\} \\
        - p_i & \leq 0 \text{ for } \in \{1, 2, 3, 4\} \\
        \end{aligned}
    \end{equation}

    The optimisation problem has two inequality constraints and regarding the
    optimality this means that:

    \begin{itemize}
        \item either the optimum is away from the boundary of the optimization
        domain, and so the constraints plays no role;
        \item or the optimum is on the constraint boundary.
    \end{itemize}

    Thus, the following three cases must be considered:

    \textbf{Case 1:} The solution is on the boundary and any of the possible
    combinations for $p_i \in \{0, 1\}$ for $i \in \{1, 2, 3, 4\}$ are candidate
    optimal solutions.

    \textbf{Case 2:} The optimum is away from the boundary of the optimization
    domain and the interior solution $p^*$ necessarily satisfies the condition
    \(\frac{d}{dp} \sum\limits_{i=1} ^ N  u_q(p^*) = 0\).

    \textbf{Case 3:} The optimum is away from the boundary of the optimization
    domain but some constraints are equalities. The candidate solutions in this
    case are any combinations of $p_j \in \{0, 1\} \quad \text{and} \quad
    \frac{d}{dp_k} \sum\limits_{i=1} ^ N  u_q^{(i)}(p) = 0$ forall $ j \in J
    \text{ \& } k \in K \text{ forall } J, K \text{ where } J \cap K = \O
    \text{ and } J \cup K = \{1, 2, 3, 4\}.$

    Combining cases 1-3 a set of candidate solutions, denoted as \(S_q\), is
    constructed as: {\scriptsize
    \begin{equation*}
        S_q =
        \left\{p \in \mathbb{R} ^ 4 \left|
            \begin{aligned}
                \bullet\quad p_j \in \{0, 1\} & \quad \text{and} \quad \frac{d}{dp_k}
                \sum\limits_{i=1} ^ N  u_q^{(i)}(p) = 0
                \quad \text{for all} \quad j \in J \quad \&  \quad k \in K  \quad \text{for all} \quad J, K \\
                & \quad \text{where} \quad J \cap K = \O \quad
                \text{and} \quad J \cup K = \{1, 2, 3, 4\}.\\
                \bullet\quad  p \in \{0, 1\} ^ 4
            \end{aligned}\right.
        \right\}.
    \end{equation*}}

    The derivative of \(\sum\limits_{i=1} ^ N  u_q^{(i)}(p)\) calculated using
    the following property (see~\cite{Abadir2005} for details):

    \begin{equation}\label{eq:first_derivative_property}
    \frac{d x A x^T}{dx} =  2Ax.
    \end{equation}

    Using property~(\ref{eq:first_derivative_property}):

    \begin{equation}\label{eq:quadratics_derivatives}
    \frac{d}{dp} \frac{1}{2}pQp^T + cp + a = pQ + c \text{ and } \frac{d}{dp} \frac{1}{2}p\bar{Q}p^T + \bar{c}p + \bar{a} = p\bar{Q} + \bar{c}.
    \end{equation}

    Note that the derivative of \(cp\) is \(c\) and the constant disappears.
    Combining these it can be proven that:

    \begingroup
    \footnotesize
    \begin{align*}
    \frac{d}{dp} \sum\limits_{i=1} ^ N  u_q^{(i)}(p) & = \sum\limits_{i=1} ^ N \frac{\frac{d}{dp}(\frac{1}{2}pQ^{(i)}p^T + c^{(i)}p + a^{(i)} )(\frac{1}{2}p\bar{Q^{(i)}}p^T + \bar{c^{(i)}}p + \bar{a^{(i)}}) -
    \frac{d}{dp}(\frac{1}{2}p\bar{Q^{(i)}}p^T + \bar{c^{(i)}}p + \bar{a^{(i)}})(\frac{1}{2}pQ^{(i)}p^T + c^{(i)}p + a^{(i)})}{(\frac{1}{2}p\bar{Q^{(i)}}p^T + \bar{c^{(i)}}p + \bar{a^{(i)}})^2} \\
    & = \sum\limits_{i=1} ^ N \frac{(pQ^{(i)} + c^{(i)} +)(\frac{1}{2}p\bar{Q^{(i)}}p^T + \bar{c^{(i)}}p + \bar{a^{(i)}})}{(\frac{1}{2}p\bar{Q^{(i)}}p^T + \bar{c^{(i)}}p + \bar{a^{(i)}})^2} -
     \frac{(p\bar{Q^{(i)}}+ \bar{c^{(i)}})(\frac{1}{2}pQ^{(i)}p^T + c^{(i)}p + a^{(i)})}{(\frac{1}{2}p\bar{Q^{(i)}}p^T + \bar{c^{(i)}}p + \bar{a^{(i)}})^2}
    \end{align*}
    \endgroup

    For \(\frac{d}{dp} \sum\limits_{i=1} ^ N  u_q(p)\) to equal zero then:

    {\scriptsize
    \begin{align}\label{eq:polynomials_roots}
        \displaystyle\sum\limits_{i=1} ^ {N}
        \left(pQ^{(i)} + c^{(i)}\right) \left(\frac{1}{2} p\bar{Q}^{(i)} p^T + \bar{c}^{(i)} p + \bar{a}^ {(i)}\right)
        - \left(p\bar{Q}^{(i)} + \bar{c}^{(i)}\right) \left(\frac{1}{2} pQ^{(i)} p^T + c^{(i)} p + a^ {(i)}\right)
        & = 0, \quad {while} \\
        \displaystyle\sum\limits_{i=1} ^ {N} \frac{1}{2} p\bar{Q}^{(i)} p^T + \bar{c}^{(i)} p + \bar{a}^ {(i)} & \neq 0.
    \end{align}}

    The optimal solution to Eq.~\ref{eq:mo_tournament_optimisation} is the point
    from $S_q$ for which the utility is maximised.
\end{proof}

\subsection{Stability of defection}\label{appendix:stability_of_defection}

An additional theoretical result that is possible to obtain due to
Theorem~\ref{theorem_one}, is a condition for which in an
environment of memory-one opponents defection is the stable choice, based only
on the coefficients of the opponents.

This result is obtained by evaluating the sign of Eq.
\ref{eq:tournament_utility}'s derivative at \(p=(0, 0, 0, 0)\). If at that
point the derivative is negative, then the utility of a player only decreases if
they were to change their behaviour, and thus defection at that point is stable.

\begin{lemma}\label{lemma:stability_of_defection}
    In a tournament of \(N\) players \(\{q^{(1)}, q^{(2)}, \dots, q^{(N)} \}\)
    for \(q^{(i)} \in \R_{[0, 1]} ^ 4\)
    defection is stable if the transition probabilities of the
    opponents satisfy conditions Eq. \ref{eq:defection_condition_one} and Eq. \ref{eq:defection_condition_two}.

    \begin{equation}\label{eq:defection_condition_one}
        \sum_{i=1} ^ N (c^{(i)T} \bar{a}^{(i)} - \bar{c}^{(i)T} a^{(i)}) \leq 0
    \end{equation}

    while,

    \begin{equation}\label{eq:defection_condition_two}
        \sum_{i=1} ^ N \bar{a}^{(i)} \neq 0
    \end{equation}
\end{lemma}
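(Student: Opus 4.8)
The plan is to evaluate the gradient of the tournament utility at the all-defect point $p=(0,0,0,0)$ and to show that the stated conditions force this gradient to be non-positive, so that defection is a stable (local) optimum. First I would recall from Theorem~\ref{theorem_one} that each term $u_q^{(i)}(p)$ is the ratio $\big(\tfrac12 pQ^{(i)}p^T + c^{(i)}p + a^{(i)}\big)/\big(\tfrac12 p\bar Q^{(i)}p^T + \bar c^{(i)}p + \bar a^{(i)}\big)$, and from Eq.~\ref{eq:quadratics_derivatives} that the derivatives of numerator and denominator are $pQ^{(i)}+c^{(i)}$ and $p\bar Q^{(i)}+\bar c^{(i)}$ respectively. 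Substituting $p=0$ into the quotient-rule expression of Eq.~\ref{eq:mo_tournament_derivative} collapses all quadratic and linear terms involving $p$, leaving the clean evaluation
\begin{equation*}
\left.\frac{d}{dp}\sum_{i=1}^{N} u_q^{(i)}(p)\right|_{p=0}
= \sum_{i=1}^{N}\frac{c^{(i)}\bar a^{(i)} - \bar c^{(i)} a^{(i)}}{\big(\bar a^{(i)}\big)^2}.
\end{equation*}

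The next step is to interpret stability. Defection $p=(0,0,0,0)$ sits at a corner of the feasible box $\R_{[0,1]}^4$, so the relevant first-order condition is not that the gradient vanishes but that moving into the admissible directions (increasing any $p_k$ away from $0$) cannot increase the objective; equivalently, each component of the gradient at $p=0$ should be non-positive. I would therefore argue that the sign of the above vector expression, entry by entry, determines whether any unilateral increase in a cooperation probability is profitable. Clearing the positive denominators $(\bar a^{(i)})^2$ does not change signs, so the condition reduces to requiring $\sum_{i=1}^{N}\big(c^{(i)}\bar a^{(i)} - \bar c^{(i)} a^{(i)}\big)\le 0$ componentwise, which is exactly Eq.~\ref{eq:defection_condition_one}; the side condition Eq.~\ref{eq:defection_condition_two}, $\sum_i \bar a^{(i)}\neq 0$, guarantees that the denominator of the aggregated utility does not vanish at $p=0$ so that the derivative is well defined there (mirroring the non-vanishing requirement in Theorem~\ref{memone_group_best_response}).

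Finally I would collect these observations: under Eq.~\ref{eq:defection_condition_two} the utility and its gradient are well defined at the all-defect vertex, and under Eq.~\ref{eq:defection_condition_one} that gradient points weakly inward (non-increasing along every feasible direction), so no local deviation toward cooperation improves a player's payoff and $p=(0,0,0,0)$ is a stable best response. The main obstacle I anticipate is reconciling the scalar phrasing of Eq.~\ref{eq:defection_condition_one} with the fact that the gradient is a four-vector: one must be careful to treat $c^{(i)},\bar c^{(i)}$ as the column vectors of Theorem~\ref{theorem_one} and to show the inequality is meant componentwise (or, at a vertex, as a statement about each admissible increasing direction), rather than as a single scalar inequality. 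The transpose notation in Eq.~\ref{eq:defection_condition_one} suggests the intended object may be a scalar pairing, so the delicate part is pinning down precisely which directional-derivative condition encodes corner stability and confirming that the $p=0$ substitution yields exactly the advertised coefficient combination.
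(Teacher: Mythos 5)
Your proposal follows essentially the same route as the paper's own proof: substitute \(p=(0,0,0,0)\) into the quotient-rule expression for the gradient, obtain \(\sum_{i=1}^{N}\bigl(c^{(i)}\bar a^{(i)} - \bar c^{(i)} a^{(i)}\bigr)/\bigl(\bar a^{(i)}\bigr)^2\), and conclude stability from the sign of the numerator sum under the non-vanishing condition on the \(\bar a^{(i)}\). Your additional care about the corner/componentwise reading of the gradient condition (and the vector-versus-scalar status of the inequality) is a genuine refinement of a point the paper leaves implicit, but the core argument is the same.
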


\begin{proof}
    For defection to be stable the derivative of the utility
    at the point \(p = (0, 0, 0, 0)\) must be negative.

    Substituting \(p = (0, 0, 0, 0)\) in,

    \begin{align}\label{eq:mo_tournament_derivative}
        \frac{d}{dp} \sum\limits_{i=1} ^ {N} {u_q}^{(i)} (p) & = \displaystyle\sum\limits_{i=1} ^ {N}
        \frac{\left(pQ^{(i)} + c^{(i)}\right) \left(\frac{1}{2} p\bar{Q}^{(i)} p^T + \bar{c}^{(i)} p + \bar{a}^ {(i)}\right)}
        {\left(\frac{1}{2} p\bar{Q}^{(i)} p^T + \bar{c}^{(i)} p + \bar{a}^ {(i)}\right)^ 2}
        - \frac{\left(p\bar{Q}^{(i)} + \bar{c}^{(i)}\right) \left(\frac{1}{2} pQ^{(i)} p^T + c^{(i)} p + a^ {(i)}\right)}
        {\left(\frac{1}{2} p\bar{Q}^{(i)} p^T + \bar{c}^{(i)} p + \bar{a}^ {(i)}\right)^ 2}
    \end{align}

    gives:

    \begin{equation}
        \left.\frac{d\sum\limits_{i=1} ^ {N} {u_q}^{(i)} (p)}{dp} \right\rvert_{p=(0,0,0,0)} =
    \sum_{i=1} ^ N \frac{(c^{(i)} \bar{a}^{(i)} - \bar{c}^{(i)} a^{(i)})}
    {(\bar{a}^{(i)})^2}
    \end{equation}

    The sign of the numerator \( \displaystyle\sum_{i=1} ^ N (c^{(i)} \bar{a}^{(i)} - \bar{c}^{(i)} a^{(i)})\)
    can vary based on the transition probabilities of the opponents.
    The denominator can not be negative, and otherwise is always positive.
    Thus the sign of the derivative is negative if and only if
    \( \displaystyle\sum_{i=1} ^ N (c^{(i)} \bar{a}^{(i)} - \bar{c}^{(i)} a^{(i)}) \leq 0\).
\end{proof}

A numerical simulation demonstrating the result is given in Fig.~\ref{fig:stability_of_defection}.

\begin{figure}[!htbp]
    \centering
    \includegraphics[width=.4\linewidth]{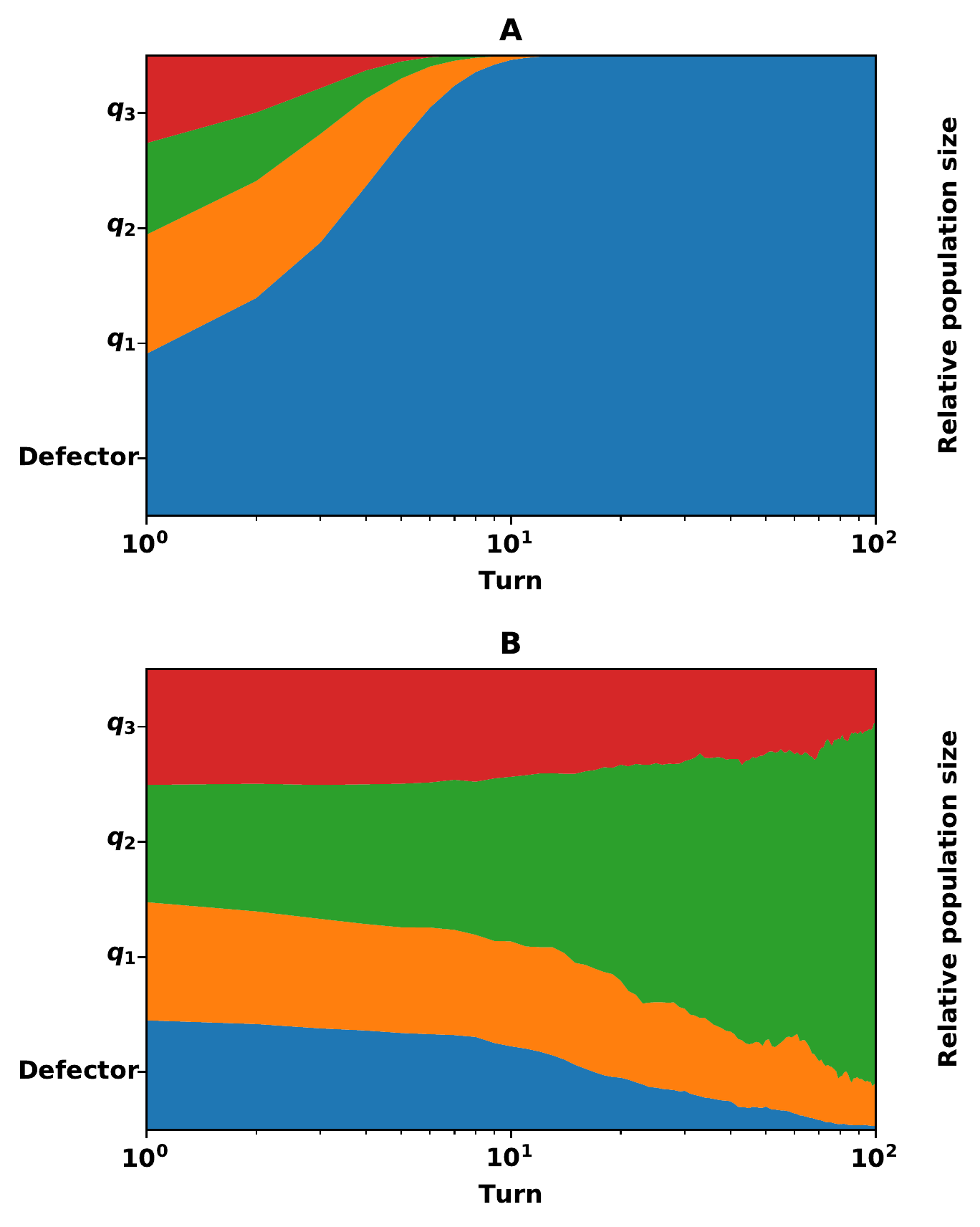}
    \caption{A. For \(q_{1}=(0.22199, 0.87073, 0.20672, 0.91861)\),
    $q_{2}=(0.48841, 0.61174, 0.76591, 0.51842)$ and
    $q_{3}=(0.2968, 0.18772, 0.08074, 0.73844)$, Eq.~\ref{eq:defection_condition_one} and
    Eq.~\ref{eq:defection_condition_two} hold and Defector takes over the
    population. B. For $q_{1}=(0.96703, 0.54723, 0.97268, 0.71482)$,
    $q_{2}=(0.69773, 0.21609, 0.97627, 0.0062)$ and
    $q_{3}=(0.25298, 0.43479, 0.77938, 0.19769)$, Eq.~\ref{eq:defection_condition_one} fails
    and Defector does not take over the population.
    These results have been obtained by using~\cite{axelrodproject} an open
    source research framework for the study of the IPD.}\label{fig:stability_of_defection}
\end{figure}

\subsection{Best response memory-one strategy in environments with noise}\label{appendix:noise}

Consider an environment where there is a probability \(p_n\) that a players actions are
executed wrong. This is referred to as the probability of noise.
Two memory-one opponents \(p \in \R_{[0, 1]} ^ 4\) and \(q \in \R_{[0, 1]} ^ 4\)
are now given by:

\[p = (p_1 (1 - p_n), p_2 (1 - p_n), p_3 (1 - p_n), p_4 (1 - p_n))\]

and

\[q = (q_1 (1 - p_n), q_2 (1 - p_n), q_3 (1 - p_n), q_4 (1 - p_n)).\]

Following a similar approach to that of Theorem~\ref{appendix:theorem_one} it
can be shown that the utility \(u_q(p)\) is give by:

\begin{equation}\label{eq:utility_with_noise}
    u_q(p) = \frac{\frac{1}{2}pQp^T + cp + a}
                {\frac{1}{2}p\bar{Q}p^T + \bar{c}p + \bar{a}},
\end{equation}
where \(Q, \bar{Q}\) \(\in \R^{4\times4}\) are square matrices defined by the
transition probabilities of the opponent \(q_1, q_2, q_3, q_4\) as follows:

\begin{center}
\begin{equation*}
\resizebox{\linewidth}{!}{\arraycolsep=2.5pt%
\boldmath\(
Q = \left[\begin{matrix}0 & - p_{n}^{3} \left(q_{1} - q_{3}\right) \left(P p_{n} q_{2} - P - T p_{n} q_{4}\right) & p_{n}^{4} \left(q_{1} - q_{2}\right) \left(P q_{3} - S q_{4}\right) & p_{n}^{3} \left(q_{1} - q_{4}\right) \left(S p_{n} q_{2} - S - T p_{n} q_{3}\right)\\- p_{n}^{3} \left(q_{1} - q_{3}\right) \left(P p_{n} q_{2} - P - T p_{n} q_{4}\right) & 0 & p_{n}^{3} \left(q_{2} - q_{3}\right) \left(P p_{n} q_{1} - P - R p_{n} q_{4}\right) & - p_{n}^{3} \left(q_{3} - q_{4}\right) \left(R p_{n} q_{2} - R - T p_{n} q_{1} + T\right)\\p_{n}^{4} \left(q_{1} - q_{2}\right) \left(P q_{3} - S q_{4}\right) & p_{n}^{3} \left(q_{2} - q_{3}\right) \left(P p_{n} q_{1} - P - R p_{n} q_{4}\right) & 0 & p_{n}^{3} \left(q_{2} - q_{4}\right) \left(R p_{n} q_{3} - S p_{n} q_{1} + S\right)\\p_{n}^{3} \left(q_{1} - q_{4}\right) \left(S p_{n} q_{2} - S - T p_{n} q_{3}\right) & - p_{n}^{3} \left(q_{3} - q_{4}\right) \left(R p_{n} q_{2} - R - T p_{n} q_{1} + T\right) & p_{n}^{3} \left(q_{2} - q_{4}\right) \left(R p_{n} q_{3} - S p_{n} q_{1} + S\right) & 0\end{matrix}\right]\)},
\end{equation*}
\begin{equation*}\label{eq:q_bar_matrix}
\resizebox{\linewidth}{!}{\arraycolsep=2.5pt%
\boldmath\(
\bar{Q} =  \left[\begin{matrix}0 & - p_{n}^{3} \left(q_{1} - q_{3}\right) \left(p_{n} q_{2} - p_{n} q_{4} - 1\right) & p_{n}^{4} \left(q_{1} - q_{2}\right) \left(q_{3} - q_{4}\right) & p_{n}^{3} \left(q_{1} - q_{4}\right) \left(p_{n} q_{2} - p_{n} q_{3} - 1\right)\\- p_{n}^{3} \left(q_{1} - q_{3}\right) \left(p_{n} q_{2} - p_{n} q_{4} - 1\right) & 0 & p_{n}^{3} \left(q_{2} - q_{3}\right) \left(p_{n} q_{1} - p_{n} q_{4} - 1\right) & p_{n}^{4} \left(q_{1} - q_{2}\right) \left(q_{3} - q_{4}\right)\\p_{n}^{4} \left(q_{1} - q_{2}\right) \left(q_{3} - q_{4}\right) & p_{n}^{3} \left(q_{2} - q_{3}\right) \left(p_{n} q_{1} - p_{n} q_{4} - 1\right) & 0 & - p_{n}^{3} \left(q_{2} - q_{4}\right) \left(p_{n} q_{1} - p_{n} q_{3} - 1\right)\\p_{n}^{3} \left(q_{1} - q_{4}\right) \left(p_{n} q_{2} - p_{n} q_{3} - 1\right) & p_{n}^{4} \left(q_{1} - q_{2}\right) \left(q_{3} - q_{4}\right) & - p_{n}^{3} \left(q_{2} - q_{4}\right) \left(p_{n} q_{1} - p_{n} q_{3} - 1\right) & 0\end{matrix}\right]\)}.
\end{equation*}
\end{center}

\(c \text{ and } \bar{c}\) \(\in \R^{4 \times 1}\) are similarly defined by:

\begin{equation}\label{eq:q_matrix_numerator}
\resizebox{0.7\linewidth}{!}{\arraycolsep=2.5pt%
\boldmath\(c = \left[\begin{matrix}p_{n}^{2} q_{1} \left(P p_{n} q_{2} - P - T p_{n} q_{4}\right)\\- p_{n} \left(p_{n} q_{3} - 1\right) \left(P p_{n} q_{2} - P - T p_{n} q_{4}\right)\\- p_{n}^{2} \left(P p_{n} q_{1} q_{2} - P p_{n} q_{2} q_{3} - P q_{2} + P q_{3} - R p_{n} q_{2} q_{4} + S p_{n} q_{2} q_{4} - S q_{4}\right)\\- p_{n} \left(R p_{n}^{2} q_{2} q_{4} - R p_{n} q_{4} - S p_{n}^{2} q_{2} q_{4} + S p_{n} q_{2} + S p_{n} q_{4} - S - T p_{n}^{2} q_{1} q_{4} + T p_{n}^{2} q_{3} q_{4} - T p_{n} q_{3} + T p_{n} q_{4}\right)\end{matrix}\right]\),}
\end{equation}
\begin{equation}\label{eq:q_matrix_denominator}
\resizebox{0.4\linewidth}{!}{\arraycolsep=2.5pt%
\boldmath\(\bar{c} = \left[\begin{matrix}p_{n}^{2} q_{1} \left(p_{n} q_{2} - p_{n} q_{4} - 1\right)\\- p_{n} \left(p_{n} q_{3} - 1\right) \left(p_{n} q_{2} - p_{n} q_{4} - 1\right)\\- p_{n}^{2} \left(p_{n} q_{1} q_{2} - p_{n} q_{2} q_{3} - q_{2} + q_{3} - q_{4}\right)\\p_{n} \left(p_{n}^{2} q_{1} q_{4} - p_{n}^{2} q_{3} q_{4} - p_{n} q_{2} + p_{n} q_{3} - p_{n} q_{4} + 1\right)\end{matrix}\right]\),
}
\end{equation}
and the constant terms \(a, \bar{a}\) are defined as \(a = P + p_{n} \left(- P q_{2} + T q_{4}\right)\) and
\(\bar{a} = p_{n} \left(- q_{2} + q_{4}\right) + 1\).

\end{document}